\newfont{\Bb}{msbm10 scaled1200}
\newfont{\bb}{msbm8 scaled1200}
\def\beq{\begin{equation}}
\def\eeq{\end{equation}}
\def\beqa{\begin{eqnarray}}
\def\eeqa{\end{eqnarray}}
\newtheorem{thm}{Theorem} [section]
\newtheorem{cor}[thm]{Corollary}
\newtheorem{lem}[thm]{Lemma}
\theoremstyle{definition}
\newtheorem{rem}[thm]{Remark}
\begin{document}

\title[A Multiparametric Quantum Superspace]{A Multiparametric Quantum Superspace and Its Logarithmic Extension}
\author[Muttalip \"{O}zav\c{s}ar ]{Muttalip \"{O}zav\c{s}ar}
\address{Department
of Mathematics, 
Yildiz Technical University 
Davutpasa-Esenler
P.O. Box 34210 
Turkey}
\email{mozavsar@yildiz.edu.tr}
\author{Erg\"{u}n Ya\c{s}ar}

\address{Department
of Mathematics 
Yildiz Technical University
Davutpasa-Esenler
P.O. Box 34210 
Turkey}
\email{eyasar@yildiz.edu.tr}
\subjclass{81R50, 57T05, 46L87}
\keywords{quantum superspace; Hopf superalgebra; differential calculus; Lie superalgebra.}

\begin{abstract}
We introduce a multiparametric quantum superspace with $m$ even generators and $n$ odd generators whose commutation relations are in the sense of Manin such that the corresponding algebra has a Hopf superalgebra. By using its Hopf superalgebra structure, we give a bicovariant differential calculus and some related structures such as Maurer-Cartan forms and the correspoinding vector fields. It is also shown that there exists a quantum supergroup related with these vector fields.
Morever, we introduce the logarithmic extension of this quantum superspace in the sense that we extend this space by the series expansion of the logarithm of the grouplike generator, and we define new elements with nonhomogeneous commutation relations. It is clearly seen that this logarithmic extension is a generalization of the $\kappa-$Minkowski superspace. We give the bicovariant differential calculus and the related algebraic structures on this extension. All noncommutative results are found to reduce to those of the standard superalgebra when the deformation parameters of the quantum (m+n)-superspace are set to one.  
\end{abstract}
\pagenumbering{arabic}
\maketitle

\section{Introduction}	
Noncommutative geometry has gained more attention of researchers as
research domain in the fields of mathematics and mathematical physics since noncommutative
differential geometry was broadly introduced by Connes \cite{AC} in 1986. In particular, quantum groups (Refs.~\cite{LDF, MJ1, MJ2, SLW1, VG, YMI1, YMI2}) and quantum spaces
(Refs.~\cite{RMU, SLW2, YMI1, YMI2}) are
explicit realizations of noncommutative spaces and play a fundamental role in the theory of the integrable models, conformal field theory \cite{GN, LCG} and the classification of knots and and links \cite{JF, RBZ}. 
The quantum (super)spaces have
been envisioned by many as a paradigm for the general programme of quantum deformed
physics \cite{SM}. Thus, based on the fact that the study of differential calculus is a main mathematical tool in the quantum deformed physics, many efforts have been accomplished in order to develop noncommutative
differential structures on quantum superspaces(groups) (Refs.~\cite{AEH, AS, CF, DPP, EMF, JW, MAY, NA, NAR, PW, RC, SC, SLW3, TB, TK, VR, ZB, IA, OO}). 
In particular, as a fundamental work,  the study of differential calculus on noncommutative space of quantum groups was initiated by Woronowicz \cite{SLW3}. In Woronowicz's approach, the differential calculus on quantum groups is inferred by Hopf algebra structure of quantum groups and this
calculus is extended to graded differential Hopf algebras. Later, Wess and Zumino introduced the differential calculus on the quantum (hyper-)plane which is covariant with respect to the quantum group \cite{JW}.

In this paper, we give a multiparametric quantum $(m+n)$-superspace on which we define a Hopf superalgebra, and its bicovariant differential calculus is given by using its  Hopf superalgebra structure. Then we define new generators with nonhomegeneous commutation relations via the series expansion of logarithm of the grouplike generator. First we recall some definitions and statements which shall be used throughout the paper.

An associative \textit{algebra} is a vector space $\mathcal{A}$
over a field $K$  together with  a bilinear mapping, namely, the multiplication $ \mu:\mathcal{A}
\otimes \mathcal{A} \rightarrow \mathcal{A}$ 
satisfying
\begin{eqnarray} \label{1}
\begin{aligned}
\mu\circ(\mbox{id}\otimes \mu)=\mu\circ( \mu\otimes \mbox{id})
\end{aligned}
\end{eqnarray}
for all $ a,b,c \in \mathcal{A}$. Morever, if there exists a mapping $\eta: K\rightarrow \mathcal{A}$ such that 
\begin{equation}\label{unit}
\mu\circ(\eta \otimes \mbox{id})=\mbox{id}=\mu\circ(\mbox{id}\otimes \eta ),
\end{equation}
where $ \mbox{id}$ stands for the identity mapping,
then $\mathcal{A}$ is a unital algebra.  A \textit{coalgebra} is a $K$-vector space $\mathcal{A}$, together with two linear mappings, $\Delta_\mathcal{A}:
 \mathcal{A}\longrightarrow \mathcal{A} \otimes \mathcal{A}$ and $\varepsilon_{\mathcal{A}}: \mathcal{A} \longrightarrow K$
(the coproduct and the counit, respectively) which satisfy
\begin{eqnarray}\label{2}
\begin{aligned}
&(\Delta_{\mathcal{A}} \otimes \mbox{id}) \circ \Delta_{\mathcal{A}} = (\mbox{id} \otimes
\Delta_{\mathcal{A}}) \circ \Delta_{\mathcal{A}} \\
& (\varepsilon_{\mathcal{A}} \otimes \mbox{id}) \circ \Delta_{\mathcal{A}} = \mbox{id}= (\mbox{id} \otimes \varepsilon_{\mathcal{A}})\circ \Delta_{\mathcal{A}}.
\end{aligned}
\end{eqnarray}
A \textit{bialgebra} is both a unital associative algebra and a coalgebra, with the compatibility
conditions that $\Delta_{\mathcal{A}}$ and $\varepsilon_{\mathcal{A}} $ are both algebra homomorphisms with $\Delta(1_{\mathcal{A}})=1_{\mathcal{A}}\otimes1_{\mathcal{A}} $
and  $\varepsilon_{\mathcal{A}}(1_{\mathcal{A}})=1_K $. A \textit{Hopf algebra} is a bialgebra together with a linear mapping $ S_{\mathcal{A}}:\mathcal{A} \longrightarrow \mathcal{A} $,
the antipode, which satisfies
\begin{eqnarray} \label{4}
\begin{aligned}
 \mu \circ (S_{\mathcal{A}} \otimes \mbox{id}) \circ \Delta_{\mathcal{A}} = \eta \circ \varepsilon_{\mathcal{A}} = \mu \circ (\mbox{id} \otimes S_{\mathcal A})\circ\Delta_{\mathcal{A}}.
\end{aligned}
\end{eqnarray}
 
Let $\Omega$ be a bimodule over any Hopf algebra $\mathcal{A}$ and
$\Delta_R : \Omega \longrightarrow \Omega \otimes \mathcal{A} $ be a
linear homomorphism. One says that $(\Omega, \Delta_R) $ is a
\textit{right-covariant bimodule} if
\begin{equation} \label{5}
\begin{aligned}
\Delta_R(ap+p'a') = \Delta_{\mathcal{A}}(a)\Delta_R(p)+ \Delta_R(p')\Delta_{\mathcal{A}}(a') 
\end{aligned}
\end{equation}
for all $ a,a' \in \mathcal{A} $ and $ p,p' \in \Omega $ and
\begin{equation} \label{6}
\begin{aligned}
&(\Delta_R \otimes \mbox{id}) \circ \Delta_R  = (\mbox{id} \otimes \Delta_{\mathcal{A}}) \circ \Delta_R, \\
&\mu \circ(\mbox{id} \otimes \varepsilon_{\mathcal{A}}) \circ \Delta_R = \mbox{id}.
\end{aligned} 
\end{equation} 
Let $\Delta_L : \Omega \longrightarrow {\mathcal{A}} \otimes \Omega $ be
a linear homomorphism. One says that $(\Omega, \Delta_L) $ is a
\textit{left-covariant bimodule} if
\begin{equation} \label{7}
\Delta_L(ap+p'a') =\Delta_{\mathcal{A}}(a)\Delta_L(p)+ \Delta_L(p')\Delta_{\mathcal{A}}(a')
\end{equation}
for all $ a,a' \in \mathcal{A} $ and $ p,p' \in \Omega $ and
\begin{equation} \label{8}
\begin{aligned}
&(\mbox{id} \otimes \Delta_L) \circ \Delta_L=(\Delta_{\mathcal{A}} \otimes \mbox{id}) \circ \Delta_L, \\
 & \mu \circ(\varepsilon_{\mathcal{A}} \otimes \mbox{id})\circ \Delta_L = \mbox{id}. 
\end{aligned} 
\end{equation}

A bicovariant bimodule over $\mathcal{A}$ is a bimodule $\Omega$ with linear mappings $\Delta_R$, $\Delta_L$ such that $(\Omega, \Delta_L) $ is the left covariant bimodule, 
$(\Omega, \Delta_R) $ is the right covariant bimodule, and
\begin{equation} \label{9}
\begin{aligned}
( \mbox{id} \otimes \Delta_R )\circ \Delta_L=(\Delta_L \otimes \mbox{id})\circ \Delta_R.
\end{aligned}
\end{equation}

Let $\mathcal{A}$ be endowed with a linear mapping ${\sf d}:{\mathcal{A}}\rightarrow\Omega$ satisfying the Leibniz rule ${\sf d}(ab)={\sf d}(a)b+a{\sf d}(b)$, and $\Omega$ is the linear span of elements $a{\sf d}b$ with $a,b \in \mathcal{A}$. Then the tuble $(\Omega, {\sf d})$ is called the first order differential calculus over $\mathcal{A}$.  The algebra of higher order differential forms (or differential graded algebra) is a $\mathbb{N}_0$-graded algebra $\Omega^{\wedge}=\bigoplus_{n\geq 0}\Omega^n$, $\Omega^0=\mathcal{A}$, with the exterior differential mapping of degree one ${\sf d}:\Omega^{\wedge}\rightarrow \Omega^{\wedge}$ satisfying ${\sf d}\circ {\sf d}:={\sf d}^2=0$ and the graded Leibniz rule ${\sf d}(w_1 \wedge w_2)={\sf d}(w_1)\wedge w_2+(-1)^n w_1 \wedge {\sf d}(w_2), \ w_1 \in \Omega^n, \ w_2 \in \Omega^{\wedge}$. The differential algebra $\Omega^{\wedge}$ has a natural Hopf algebra structure obtained by the coproduct $\hat{\Delta}=\Delta_R+\Delta_L$, where $\Delta_R({\sf d}(a))=(({\sf d} \otimes \mbox{id}) \circ \Delta_{\mathcal{A}})(a), \ 
\Delta_L({\sf d}(a))=((\mbox{id} \otimes {\sf d}) \circ \Delta_{\mathcal{A}})(a)$ (see Refs.~\cite{SLW3,TB}).

Let $z_1=1$, $z_2,...,z_{m+n}$ be arbitrary integers and $p_1=1$, $p_2,...,p_{m+n}$ be nonzero complex numbers. Consider a unital associative superalgebra generated by even generators $a_i,i=1,2,...,m$ and odd ones $a_i,i=m+1,...,m+n$ satisfying the commutation relations as follows:
\begin{equation} \label{10}
a_ia_j = (-1)^{\hat{i}\hat{j}}p_j^{z_i}p_i^{-z_j} a_ja_i,
\end{equation}
for $i,j=1,2,...,m+n$. Note that $\hat{i}\in \mathbb{Z}_2$ stands for the parity of the generator $a_i$, and it follows from (\ref{10}) that $a_i^2=0$ for $\hat{i}=1$ . Throughout this paper, we denote by ${\mathcal{A}}$ this quantum superalgebra. At this position, we remark that ${\mathcal{A}}$ is a generalization of the quantum superalgebra considered in Ref.~\cite{MO1}. It is clear that this superalgebra has a Hopf superalgebra structure with the following mappings:

\begin{equation} \label{11}
\begin{aligned}
\Delta(a_i)&  =  a_1^{z_i} \otimes a_i + a_i \otimes a_1^{z_i},\quad \Delta(a_1) =  a_1 \otimes a_1  \\
\varepsilon(a_i)&= 0, \quad \varepsilon(a_1)=1 \\
S(a_i)&=-a_1^{-z_i}a_ia_1^{-z_i},\quad S(a_1)=a_1^{-1},
\end{aligned}
\end{equation}
where $i=2,3,...,m+n$.
Note that the tensor product in ${\mathcal{A}}\otimes {\mathcal{A}}$ is given as follows:
\begin{equation} \label{12}
(f\otimes u)(v\otimes g)=(-1)^{\hat{u}\hat{v}}fv\otimes ug, \ f,g,u,v \in {\mathcal{A}}.
\end{equation}
\begin{rem}
By a brief survey on more general algebras such as $\Gamma-$graded and color(Lie) algebras \cite{MS} and Lie $\tau -$algebras \cite{DG,VK,VKK}, one can see that the commutation relation given by (\ref{10}) can be represented by a bicharacter $\alpha:\Gamma \times \Gamma\rightarrow k^* $ where $\Gamma$ is an abelian group and $\alpha$ holds the conditions $\alpha(f.g,h)=\alpha(f,h)\alpha(g,h)$ and $\alpha(f,g.h)=\alpha(f,g)\alpha(f,h)$. Furthermore, one can define the multiplication in the tensor product $\mathcal{A}\otimes \mathcal{A}$ by 
\begin{equation}\label{nm}
(a_i\otimes a_j)(a_k\otimes a_l)=\alpha(g_j,g_k)a_ia_k\otimes a_ja_l
\end{equation}
 where $\alpha(g_i,g_j)=(-1)^{\hat{i}\hat{j}}p_j^{z_i}p_i^{-z_j}$ and $g_i$'s are the elements in $\Gamma$ which as degree correspond to the elements $a_i$'s. Thus this basic multiplication can be extended to the all homogeneous elements(ordered monomials) by using the properties of $\alpha$. However, when we take into account the coproduct defined in (\ref{11}) together with the multiplication (\ref{nm}), we easily see that the coproduct does not preserve the commutation relation (\ref{10}). This discrepancy is caused by the existence of $a_1^{z_i}$ in the coproduct (\ref{10}) and the fact that the generator $a_1$ is a grouplike element. Thus we can overcome from this inconsistency by setting all generators $a_i$'s as the primitive elements, that is, $\Delta(a_i)=a_i\otimes 1+ 1\otimes a_i$.  
\end{rem}

The differential algebra $\Omega^{\wedge}$ of all differential forms over ${\mathcal{A}}$ can be given by the following relations of the generators $a_i$'s with their differentials ${\sf d}(a_i) $'s:
\begin{equation} \label{13}
a_i {\sf d}(a_j) = (-1)^{\hat{i}(\hat{j}+1)}p_j^{z_i}p_i^{-z_j}{\sf d}(a_j) a_i,  
 \end{equation}
and the relations among differentials
\begin{equation} \label{14}
 {\sf d}(a_i)\wedge{\sf d}(a_j)= (-1)^{(\hat{i}+1)(\hat{j}+1)}p_j^{z_i}p_i^{-z_j}{\sf d}(a_j)\wedge {\sf d}(a_i),
\end{equation}
where ${\sf d}:\Omega^{\wedge} \rightarrow \Omega^{\wedge}$ is the exterior differential operator satisfying
\begin{equation} \label{15}
{\sf d}^2 = 0,
\end{equation}
and the graded Leibniz rule
\begin{equation} \label{16}
{\sf d}(u\wedge v) = ({\sf d} u)\wedge v + (-1)^{\hat{u}}u \wedge({\sf d} v).
\end{equation}
Note that the parity of ${\sf d}(w)$ is given as $\hat{w}+1$ for $ w\in \Omega^n$ $(n=0,1,2,...,)$, that is, ${\sf d}$ is of degree one.  Morever, one can give the right covariant  bimodule structure on the space of 1-forms via $\Delta_R:\Omega^1\rightarrow \Omega^1\otimes {\mathcal{A}}$, defined by $\Delta_R({\sf d}(f))=(({\sf d}\otimes \mbox{id})\circ \Delta)(f)$, and the left covariant bimodule one by $\Delta_L:\Omega^1\rightarrow {\mathcal{A}}\otimes\Omega^1$, defined by $\Delta_L({\sf d}(f))=((\mbox{id}\otimes{\sf d}) \circ \Delta)(f)$ for $f\in{\mathcal{A}}$, where we use the graded tensor product of mappings. This bimodule structure is also extended to the space of all higher-order forms $\Omega^{\wedge}$. 
Therefore, if we set ${\sf d}(f)=\sum_{k=1}^{m+n}{{\sf d}(a_k)\partial_{a_k}(f)}$, from the differential calculus above, it follows that the Weyl superalgebra corresponding to ${\mathcal{A}}$ is given by the relations (\ref{10}), the following relations of the derivative operators with the generators with $a_i$'s:
\begin{equation} \label{17}
\partial_{a_i}a_j=\delta_{ij}+(-1)^{\hat{i}\hat{j}}p_j^{-z_i}p_i^{z_j}a_j\partial_{a_i}, \quad i,j=1,2,...,m+n
\end{equation}
and the relations among the derivative operators
\begin{equation} \label{18}
\partial_{a_i} \partial_{a_j} = (-1)^{\hat{i}\hat{j}}p_j^{z_i}p_i^{-z_j} \partial_{a_j} \partial_{a_i},
\end{equation}
where  $\partial_{a_i}:{\mathcal{A}}\rightarrow {\mathcal{A}}$ is a linear operator acting on a monomial ordered of the form $f=a_1^{k_1}a_2^{k_2}...a_{m+n}^{k_{m+n}}$ as follows:
\begin{equation}\label{19}
 \partial_{a_i}(a_1^{k_1}a_2^{k_2}\cdot\cdot\cdot a_{m+n}^{k_{m+n}})=(-1)^{\hat{i}\hat{f_i}}~k_i~p_i^{\sum_{r=0}^{i-1}{z_rk_r}}~\prod_{r=1}^{i-1}{p_r^{-k_rz_i}}a_1^{k_1}a_2^{k_2}\cdot\cdot\cdot a_i^{k_i-1}\cdot\cdot\cdot a_{m+n}^{k_{m+n}} 
 \end{equation}
where $f_i=a_1^{k_1}\cdot a_2^{k_2}\cdot\cdot\cdot a_{i-1}^{k_{i-1}}$.
\begin{rem}
It is clearly seen from (\ref{18}) that the partial derivatives yield a representation of the superalgebra $\mathcal{A}$.
\end{rem}

Now we shall construct a quantum supergroup of the vector fields corresponding to Maurer-Cartan forms on ${\mathcal{A}}$. First, let us start by the right-invariant Maurer-Cartan formula for any $f\in {\mathcal{A}}$ \cite{SLW3}:
\begin{equation} \label{37}
w_f:=\mu(({\sf d}\otimes S_{\mathcal{A}}) \Delta_{\mathcal{A}}(f)),
\end{equation}
where $\mu$ stands for the multiplication. Thus we have 
\begin{equation} \label{38}
\begin{aligned}
 &\omega_{a_1} = {\sf d}a_1~a_1^{-1} \\
&\omega_{a_i} =  {\sf d}a_i~a_1^{-z_i}-z_i~{\sf d}a_1~a_1^{-1}a_i~a_1^{-z_i}, \quad i=2,3,...,m+n.
\end{aligned}
\end{equation}
We also need the commutation relations of the Maurer-Cartan forms $\omega_{a_i}$'s with the generators $x_i$'s: 
\begin{equation} \label{rel}
 a_i\omega_{a_j} = (-1)^{\hat{i}(\hat{j}+1)}p_j^{z_i}\omega_{a_j}a_i, \quad i,j=1,2,...,m+n. 
\end{equation}
Thus, taking into account the fact 
\begin{equation*}
{\sf d}:=\omega_{a_1}T_{a_1}+\omega_{a_2}T_{a_2}+...+\omega_{a_{m+n}}T_{a_{m+n}}\equiv {\sf d}a_1\partial_{a_1}+{\sf d}a_2\partial_{a_2}+...+{\sf d}a_{m+n}\partial_{a_{m+n}} 
\end{equation*}
where $T_{a_i}$'s are the vector fields corresponding to the Maurer-Cartan forms, we can write the vector fields in terms of the partial derivatives $\partial_{a_i}$'s:
 
\begin{equation*}
T_{a_1}= \sum_{k=1}^{m+n}{z_ia_i\partial_{a_i}}, \quad T_{a_i}=a_1^{z_i}\partial_{a_i}, \quad i=2,3,...,m+n.
\end{equation*}
To give the algebra of the vector fields, we compute the following super commutative algebra relations by using (\ref{10}), (\ref{17}) and (\ref{18}) as follows:
\begin{equation*}
T_{a_i}T_{a_j}=(-1)^{\hat{i}\hat{j}}T_{a_j}T_{a_i}, \quad i,j=1,2,...,m+n.
\end{equation*}
To construct Hopf superalgebra structure, it is sufficient to give the Leibniz rules related with the vector fields. For this, consider any monomial $f=a_1^{k_1}a_2^{k_2}\cdot\cdot\cdot a_{m+n}^{k_{m+n}}$ and any element $g$ of ${\mathcal{A}}$. In what follows we shall use the relation of $f$ with the Maurer-Cartan form $\omega_{a_i}$:
\begin{equation}\label{cart}
\begin{aligned}
&f\omega_{a_1}=(-1)^{\hat{f}}\omega_{a_1}f, \\
 &f\omega_{a_i}=(-1)^{\hat{f}\hat{\omega_{a_i}}}p_j^{\sum_{l=1}^{m+n}{z_lk_l}}\omega_{a_i}f, \quad i=2,3,...,m+n.
\end{aligned}
\end{equation}
Now we consider the action of ${\sf d}$ on $f\cdot g$:
\begin{equation*}
{\sf d}(f\cdot g)=(\omega_{a_1}T_{a_1}+\omega_{a_2}T_{a_2}+...+\omega_{a_{m+n}}T_{a_{m+n}})(f)\cdot g+
(-1)^{\hat{f}}f\cdot (\omega_{a_1}T_{a_1}+\omega_{a_2}T_{a_2}+...+\omega_{a_{m+n}}T_{a_{m+n}})(g).
\end{equation*}
To collect with respect to $\omega_{x_i}$'s, we use the relations given by (\ref{cart}):
\begin{equation*}
\begin{aligned}
(\omega_{a_1}T_{a_1}+\omega_{a_2}T_{a_2}+...+\omega_{a_{m+n}}T_{a_{m+n}})(f\cdot g)&= \omega_{a_1}(T_{a_1}(f)\cdot g+f\cdot T_{a_1}(g)) \\
&+\omega_{a_i}\left(T_{a_i}(f)\cdot g+(-1)^{\hat{f}\hat{a_i}}p_i^{{\sum_{l=1}^{m+n}{z_lk_l}}}f\cdot T_{a_i}(g)\right).
\end{aligned}
\end{equation*}
This last equation yields the following Leibniz rules:
\begin{equation*}
\begin{aligned}
&T_{a_1}(f\cdot g)=T_{a_1}(f)\cdot g+f\cdot T_{a_1}(g) \\
&T_{a_i}(f\cdot g)=T_{a_i}(f)\cdot g+(-1)^{\hat{f}\hat{a_i}}p_i^{{\sum_{l=1}^{m+n}{z_lk_l}}}f\cdot T_{a_i}(g), \quad i=2,3,...,m+n.
\end{aligned}
\end{equation*}
Thus using the above relation with the fact that the graded tensor product is $(X \otimes Y)(f \otimes g)=(-1)^{\hat{Y}\hat{f}}X(f)\otimes Y(g)$ and
$\mu \left(\Delta(X)(f \otimes g)\right):=X(f\cdot g)$ for any two elements $X$ and $Y$ with degrees $\hat{X}$ and $\hat{Y}$, respectively, we have the following deformed coproducts
for the quantum Lie superalgebra:
\begin{equation*}
\begin{aligned}
&\Delta(T_{a_1})=T_{a_1}\otimes 1 + 1 \otimes T_{a_1} \\
&\Delta(T_{a_i})=T_{a_i} \otimes 1+ p_i^{T_{a_1}} \otimes T_{a_i}, \quad i=2,3,...,m+n.
\end{aligned}
\end{equation*}
Note that we also use $T_{a_1}(f)=(\sum_{l=1}^{m+n}{z_lk_l})f$ obtained from (\ref{19}). From the Hopf algebra axioms, we also obtain the counit and the antipode as follows:
\begin{equation*}
\begin{aligned}
&\varepsilon(T_{a_i})=0, \quad i=1,2,...,m+n \\
&S(T_{a_1})=-T_{a_1}, \quad S(T_{a_i})=-p_i^{-T_{a_1}} T_{a_i}, \quad i=2,3,...,m+n.
\end{aligned}
\end{equation*}

\section{ Nonhomegeneous commutation relations derived from ${\mathcal{A}}$}

Let us generalize $\mathcal{A}$ to a new algebra obtained by considering formal series in the grouplike element $a_1$ such that
\begin{equation} \label{20}
\left(\sum c_k a_1^k \right) a_j= a_j \sum c_k (p_ja_1)^k, \quad c_k \in \mathbb{C},
\end{equation}
and the multiplication of two power series in $a_1$ is defined through the usual Cauchy product.  Now, set
\begin{equation} \label{21}
\begin{aligned}
x_1&:= ln(a_1), \quad e^{x_1}:=a_1  \\
x_i&:=a_1^{-1}a_i, \quad i=2,3,...,m+n \\
h_i&:=ln(p_i), \quad i=1,2,...,m+n
\end{aligned} 
\end{equation}
such that for $i,j=2,3,...,m+n$
\begin{equation} \label{22}
 [x_1,x_i]=h_i x_i,\quad x_i~x_j=(-1)^{\hat{i}\hat{j}}p_i^{1-z_j}p_j^{z_i-1}x_j~x_i, 
\end{equation}
where $ [u,v]= u v- v u$. Let $\mathcal{M}$ denote a new algebra
generated by $x_i$'s. From (\ref{22}), the following noncommutative relations are obtained :
\begin{equation} \label{23}
x_1^k ~x_j^l = x_j^l ~(x_1+ l h_j)^k, \quad
x_i^k~ x_j^l =\left((-1)^{\hat{i}\hat{j}}p_i^{1-z_j}p_j^{z_i-1}\right)^{kl}x_j^l~x_i^k
\end{equation}
for $ k,l \in {\mathbb{N}}$ and $i,j=2,3,...,m+n$.

\noindent Using Hopf superalgebra structure of the algebra $\mathcal{A}$, one can easily see that the coproduct for the algebra $\mathcal{M}$ appears as:
\begin{equation} \label{24}
\Delta_{\mathcal{M}}(x_i) =  e^{(z_i-1)x_1} \otimes x_i + x_i \otimes e^{(z_i-1)x_1}, \ \ i=1,2...,m+n
\end{equation}
the counit and the antipode are given as follows:
\begin{equation} \label{25}
\varepsilon_{\mathcal{M}}(x_i) = 0, \quad S_{\mathcal{M}}(x_i)= - e^{(1-z_i)x_1} ~x_i ~e^{(1-z_i)x_1} \ \ i=1,2,...,m+n.
\end{equation}
\begin{rem}
In fact, it is clear that the algebra $\mathcal{M}$ with commutation relations (\ref{22}) is a generalization of $\kappa-$ deformed superspace as the superspace extension of the $\kappa-$deformed Minkowski space \cite{PK,SZ}.
\end{rem}

Since $\mathcal{M}$ has a Hopf superalgebra structure, it is well known that there exists a bicovariant differential calculus over $\mathcal{M}$. In order to see this differential 
calculus explicitly, we want to see commutation parameters of the relevant relations in terms of the parameters $h_i$'s. To achieve this, one can use the following mappings having the properties of the right(left) bicovariant structure:
\begin{equation} \label{26}
\begin{aligned}
\Delta_R({\sf d}(f))&=(({\sf d} \otimes \mbox{id}) \circ \Delta_{\mathcal M})(f), \\
\Delta_L({\sf d}(f))&=((\mbox{id} \otimes {\sf d}) \circ \Delta_{\mathcal M})(f)
\end{aligned}
\end{equation}
for $f\in \mathcal{M}$. However, because of nonhomegeneous commutation relations (\ref{22}), we need to perform too much computation to realize the approach mentioned above. For this approach, the interested reader is referred to the bicovariant differential calculus on the $\kappa-$Minkowski space \cite{ASS}. Instead the approach used in \cite{ASS}, we use an approach that requires a simple computation, based on the following lemma \cite{MO2}:

\begin{lem} The following series is a representation of $ln (a_1)$
\begin{equation} \label{27}
\sum_{k=1}^{\infty}\dfrac{(-1)^{k+1}}{k} (a_1-1)^k,
\end{equation}
which means that it is consistent with relations (\ref{22}). Morever, based on the action of operator $\partial_{a_1}$ (see \ref{19}), this representation yields the fact that $\partial_{a_1}(lna_1)=a_1^{-1}.$
\end{lem}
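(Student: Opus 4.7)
The plan is to check two consequences of defining $x_1=F(a_1):=\sum_{k=1}^{\infty}\frac{(-1)^{k+1}}{k}(a_1-1)^k$: first, that the relation $[x_1,x_j]=h_j x_j$ from (\ref{22}) then holds inside the extended algebra for every $j=2,\ldots,m+n$, and second, that $\partial_{a_1}(F(a_1))=a_1^{-1}$. The remaining commutation relations in (\ref{22}) involve only $x_i=a_1^{-1}a_i$ with $i,j\geq 2$ and follow directly from (\ref{10}) and (\ref{20}), so the series plays no role there.

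For the derivative identity, I would apply $\partial_{a_1}$ termwise, using (\ref{19}): since $p_1=1$ and $z_1=1$, the operator $\partial_{a_1}$ acts on polynomials in $a_1$ as ordinary differentiation, so $\partial_{a_1}[(a_1-1)^k]=k(a_1-1)^{k-1}$. Summing then gives the formal geometric series
\begin{equation*}
\sum_{j=0}^{\infty}(1-a_1)^j=a_1^{-1},
\end{equation*}
which is the claim.

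For the commutation relation, I would first use (\ref{20}) on the coefficients of $F$ to obtain $F(a_1)a_j=a_j F(p_j a_1)$. Because $a_1^{-1}$ commutes with $F(a_1)$ and $x_j=a_1^{-1}a_j$, the desired identity $[F(a_1),x_j]=h_j x_j$ is equivalent to the functional equation $F(p_j a_1)=F(a_1)+h_j$. To establish this, set $G(a_1):=F(p_j a_1)-F(a_1)$ and note that the same geometric-series collapse, applied this time via $\partial_{a_1}[(p_j a_1)^r]=r\,p_j(p_j a_1)^{r-1}$, gives $\partial_{a_1}F(p_j a_1)=p_j(p_j a_1)^{-1}=a_1^{-1}$, so $\partial_{a_1}G=0$. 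Thus $G$ is a constant in $a_1$; evaluating at $a_1=1$ produces $G=F(p_j)=\ln(p_j)=h_j$, exactly as needed.

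The only genuinely delicate point, and therefore the main obstacle, is justifying these manipulations formally: one must interpret (\ref{27}) as a formal power series in $(a_1-1)$ so that termwise differentiation, the chain-rule identity for $F(p_j a_1)$, and the substitution $a_1=1$ all make sense, and similarly for the expansion of $\ln(p_j)$. Once this framework is fixed, the argument above is the classical functional-equation proof for the logarithm, and the remainder is algebraic bookkeeping inside $\mathcal{A}$.
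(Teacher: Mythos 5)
Your proposal cannot be compared line-by-line with the paper, because the paper does not actually prove this lemma: it is imported from Ref.~\cite{MO2} without argument, and the text simply uses the identification $\ln(p_ja_1)=\ln p_j+\ln a_1$ implicitly. Your reconstruction is the natural one and is essentially correct at the level of rigor the paper itself operates at: the reduction of $[x_1,x_j]=h_jx_j$ to the functional equation $F(p_ja_1)=F(a_1)+h_j$ via (\ref{20}) is right (and you correctly note that the remaining relations in (\ref{22}) follow from (\ref{10}) alone), and the computation $\partial_{a_1}F(a_1)=\sum_{k\geq 0}(1-a_1)^k=a_1^{-1}$ is a genuine identity of formal power series in $(a_1-1)$, since by (\ref{19}) (with $p_1=1$, $z_1=1$) the operator $\partial_{a_1}$ is ordinary differentiation on series in $a_1$.

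The one point you flag as delicate is indeed the only real issue, and it is slightly stronger than you state: $F(p_ja_1)=\sum_k\frac{(-1)^{k+1}}{k}(p_ja_1-1)^k$ is \emph{not} a formal power series in $(a_1-1)$ at all, because $p_ja_1-1=(p_j-1)+p_j(a_1-1)$ has a nonzero constant term, so each coefficient of $(a_1-1)^l$ becomes an infinite numerical sum; the same applies to $\sum_m(1-p_ja_1)^m$ in your derivative computation and to the evaluation $F(p_j)=\ln p_j$, all of which require something like $|p_j-1|<1$ (or an analytic rather than formal completion) to make sense. So ``fixing the framework'' is not purely cosmetic --- the functional-equation argument, and with it the whole lemma, really only holds under such a convergence assumption or in a suitably completed topological algebra. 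Since the paper (and its source \cite{MO2}) elides exactly the same issue when writing $h_i=\ln p_i$ and manipulating series in $a_1$ freely, I would not count this as a fatal gap in your argument, but a complete proof should state the hypothesis on the $p_j$ (or the completion used) explicitly rather than leaving it as a framework to be fixed.
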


\begin{thm}
We have the following relations of the generators $a,b$ and $\beta$ with their differentials, and these relations yield a bicovariant differential calculus over $\mathcal M$ 
\begin{equation} \label{28}
\begin{aligned}
&\left[x_1, {\sf d}x_1\right]=0, \\
&\left[x_1, {\sf d}x_i\right]=h_i {\sf d}x_i,~\left[x_i,{\sf d}x_j\right]_{\eta(i,j)}=0, \\
\end{aligned}
\end{equation}
where $~i=2,3,...,m+n,~j=1,2,...,m+n$, and  $[~,~]$ and $[~,~]_h$ are $\mathbb{Z}_2$-graded commutator and $q$-commutator defined by $[a,b]=ab-(-1)^{\hat{a}\hat{b}}ba$, $[a,b]_q=ab-(-1)^{\hat{a}\hat{b}}qba$, respectively, and $\eta(i,j)=p_i^{1-z_j}~p_j^{z_i-1}$. 
\end{thm}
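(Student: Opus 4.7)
The plan is to derive the three bracket relations from the already-established bicovariant calculus on $\mathcal{A}$ (relations (\ref{13}) and (\ref{14})) by pushing everything through the substitution $x_1=\ln a_1$, $x_i = a_1^{-1}a_i$ of (\ref{21}), and to inherit the bicovariance from that of $\mathcal{A}$. Four one-line identities will carry the entire calculation: (a) from (\ref{13}) with $i=j=1$, $a_1\,{\sf d}a_1={\sf d}a_1\,a_1$, so $\ln a_1$ and $a_1^{-1}$ both commute with ${\sf d}a_1$; (b) from (\ref{10}) with $i=1$, $a_1 a_j = p_j a_j a_1$ for $j\ge 2$; (c) from (\ref{13}) with $i=1$, $a_1\,{\sf d}a_j = p_j\,{\sf d}a_j\,a_1$ for $j\ge 2$; and (d) consequently, for any formal power series $f$, $f(a_1)\,a_j = a_j\,f(p_j a_1)$ and $f(a_1)\,{\sf d}a_j = {\sf d}a_j\,f(p_j a_1)$, which is legitimate for $f=\ln$ by the lemma (\ref{27}).

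First I would compute the differentials of the new generators. By the Leibniz rule and (a), ${\sf d}(a_1^{-1})=-a_1^{-2}{\sf d}a_1$, so ${\sf d}x_1 = a_1^{-1}{\sf d}a_1$ and ${\sf d}x_i = -a_1^{-2}{\sf d}a_1\,a_i + a_1^{-1}{\sf d}a_i$ for $i\ge 2$. The first relation $[x_1,{\sf d}x_1]=0$ is then immediate from (a) since $x_1$ is a power series in $a_1$. For the second, applying (d) with $f=\ln$ gives $\ln(p_j a_1)=h_j+\ln a_1$, whence $[x_1,a_j]=h_j a_j$ and $[x_1,{\sf d}a_j]=h_j\,{\sf d}a_j$; substituting into the expansion of ${\sf d}x_i$ and using that $x_1$ commutes with $a_1^{-1}$ and ${\sf d}a_1$ collapses to $[x_1,{\sf d}x_i]=h_i\,{\sf d}x_i$.

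Next, for the mixed relation $[x_i,{\sf d}x_j]_{\eta(i,j)}=0$ I would treat $j=1$ first: here $\eta(i,1)=1$, and the formula drops out by combining (c) and the case $j=1$ of (\ref{13}), namely $a_i\,{\sf d}a_1=(-1)^{\hat i}p_i^{-1}\,{\sf d}a_1\,a_i$. For $i,j\ge 2$, I expand both $x_i$ and ${\sf d}x_j$ in terms of the $a$-generators and their differentials, then move every $a_1^{\pm 1}$ to the extreme left using (b), (c). The prefactor $(-1)^{\hat i(\hat j+1)}p_j^{z_i}p_i^{-z_j}$ from (\ref{13}) is shifted by factors $p_j^{-1},\,p_i$ coming from the transposition of $a_1^{-1}$ past $a_j$ and $a_i$, and after regrouping one reads off exactly $(-1)^{\hat i(\hat j+1)}\,\eta(i,j)$ with $\eta(i,j)=p_i^{1-z_j}p_j^{z_i-1}$, as required.

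Finally, bicovariance transfers for free: the extended calculus is defined through $\Delta_R,\Delta_L$ given by (\ref{26}) from the coproduct (\ref{24}), so the bicovariance axioms (\ref{6}), (\ref{8}), (\ref{9}) on $\Omega^1(\mathcal M)$ reduce term-by-term to those already satisfied on $\Omega^1(\mathcal A)$. The main obstacle I expect is the bookkeeping in the $i,j\ge 2$ case of the last relation: three or four $a_1^{\pm 1}$ factors have to be transported across mixed products and the residual exponent in $p_i,p_j$ must collapse to $\eta(i,j)$. This is purely mechanical, but the signs and exponents are easy to misplace; I would cross-check by specialising to $p_k=1$ (where the bracket should reduce to the graded commutator of the standard superspace) and to $i=j$ (where both sides match on inspection).
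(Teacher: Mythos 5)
Your derivation of the three bracket relations is correct and is essentially the paper's own first step: differentiate the substitution (\ref{21}), use $a_1{\sf d}a_1={\sf d}a_1a_1$, $a_1a_j=p_ja_ja_1$ and $a_1{\sf d}a_j=p_j{\sf d}a_ja_1$ from (\ref{10}) and (\ref{13}), and invoke the series representation of $\ln a_1$ (the Lemma, eq.\ (\ref{27})) to justify $\ln(a_1)\,a_j=a_j(\ln a_1+h_j)$ and $\ln(a_1)\,{\sf d}a_j={\sf d}a_j(\ln a_1+h_j)$; this is exactly how the paper passes from (\ref{29}) to (\ref{28}), and the overall scalar normalization of ${\sf d}x_i$ (your version versus the paper's prefactor in (\ref{29})) is irrelevant for the commutation relations.

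The genuine gap is in the second half, the claim that bicovariance ``transfers for free.'' Because the relations (\ref{28}) are inhomogeneous (they mix $x_1{\sf d}x_i$ with ${\sf d}x_i$), it is not automatic that the coactions defined through (\ref{26}) are well defined on the bimodule of one-forms over $\mathcal M$ presented by (\ref{22}) and (\ref{28}), nor that they take values in $\Omega^1(\mathcal M)\otimes\mathcal M$ and $\mathcal M\otimes\Omega^1(\mathcal M)$ rather than in some larger completion of the $\mathcal A$-calculus. This consistency check is precisely what the paper spends the rest of the proof on: it computes $\Delta_R({\sf d}x_i)$ and $\Delta_L({\sf d}x_i)$ explicitly in $\mathcal M$-terms (eq.\ (\ref{30})), verifies by direct calculation that these maps preserve $[x_1,{\sf d}x_j]=h_j{\sf d}x_j$ and, similarly, the remaining relations, and then checks the induced relations (\ref{32}) among the differentials. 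Your inheritance argument can be made rigorous, but only by exhibiting the embedding of $(\mathcal M,\Delta_{\mathcal M})$ and its calculus into a suitably completed $\mathcal A$-calculus (with $a_1^{-1}$ and formal power series adjoined) and showing that the coproduct and both coactions restrict correctly to the $x$-generators and their differentials --- which is, in substance, the content of (\ref{30}) and the invariance computation. As written, ``reduces term-by-term to those already satisfied on $\Omega^1(\mathcal A)$'' asserts the conclusion rather than proving it, and it omits the one step of the theorem that is not purely mechanical.
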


\begin{proof}  
We first differentiate $ln(a_1)$, $a_1^{-1} a_i$, by using the relations (\ref{13}), which results in
\begin{equation} \label{29}
{\sf d}x_1={\sf d}a_1 a_1^{-1}, \quad {\sf d}x_i= h_i^{-1} {\sf d}a_i a_1^{-1} - h_i^{-1} {\sf d}a_1 a_1^{-1} a_i a_1^{-1},
\end{equation}
for $i=2,3,...,m+n$.
Using the series (\ref{27}), (\ref{29}) and (\ref{13}) in $ x_1{\sf d}x_i $ implies $x_1{\sf d}x_i= (-1)^{\hat{i}(\hat{j}+1)}{\sf d}x_ix_1+h_i{\sf d}x_i $. The other noncommutative relations are obtained in similar way. Now, we should prove that the relations (\ref{28}) yield a bicovariant differential calculus with (\ref{26}). Indeed, we first show that the mappings $\Delta_R$ and $\Delta_L$, acting on $\mathcal{M}$ as $\Delta_{\mathcal{M}}$ does, and the differentials as follows, preserve the relations (\ref{28}):

\begin{equation} \label{30}
\begin{aligned}
&\Delta_R({\sf d}x_i)  =  (z_i-1)e^{(z_i-1)x_1}{\sf d}x_1 \otimes x_i +{\sf d}x_i\otimes e^{(z_i-1)x_1}, \\   
&\Delta_L({\sf d}x_i)  =    e^{(z_i-1)x_1}\otimes{\sf d}x_i + (-1)^{\hat{i}}x_i\otimes (z_i-1)e^{(z_i-1)x_1}{\sf d}x_1.
\end{aligned}
\end{equation}

For this, for example, we show that
\begin{equation*}
\begin{aligned}
\Delta_L(x_1{\sf d}x_j)  =&(1 \otimes x_1 + x_1 \otimes 1)\left(e^{(z_j-1)x_1}\otimes{\sf d}x_j + (-1)^{\hat{j}}x_j\otimes (z_j-1)e^{(z_j-1)x_1}{\sf d}x_1\right) \\
   =& e^{(z_j-1)x_1} \otimes (h_j{\sf d}x_j+{\sf d}x_jx_1 ) + (-1)^{\hat{j}}x_j\otimes (z_j-1)x_1e^{(z_j-1)x_1}{\sf d}x_1+ x_1e^{(z_j-1)x_1}\otimes {\sf d}x_j \\
&+ (-1)^{\hat{j}} (h_j x_j+x_jx_1)\otimes(z_j-1)	e^{(z_j-1)x_1}{\sf d}x_1\\
	=& \left(e^{(z_j-1)x_1}\otimes{\sf d}x_j + (-1)^{\hat{j}}x_j\otimes (z_j-1)e^{(z_j-1)x_1}{\sf d}x_1\right) (1 \otimes x_1 + x_1 \otimes 1)\\
	&+ h_j \left(e^{(z_j-1)x_1}\otimes{\sf d}x_j + (-1)^{\hat{j}}x_j\otimes (z_j-1)e^{(z_j-1)x_1}{\sf d}x_1\right) \\
	=& \Delta_L({\sf d}x_j) \Delta_L(x_1)+ h_j\Delta_L({\sf d}x_j).
\end{aligned}
\end{equation*}
That is, $\Delta_L$ leaves invariant the commutation relation $\left[x_1, {\sf d}x_j\right]=h_j {\sf d}x_j,~j=2,3,...,m+n $.
It is also readily seen that the mappings given by (\ref{30}) hold the conditions of the bicovariant bimodule.
Morever, if we apply the exterior differential operator ${\sf d}$ to each relation in (\ref{28}), we have the following relations among the differentials 
\begin{equation} \label{32}
 {\sf d}x_i\wedge{\sf d}x_j=(-1)^{(\hat{i}+1)(\hat{j}+1)}\eta(i,j){\sf d}x_j\wedge{\sf d}x_i, \quad i,j= 1,2,...,m+n
\end{equation}
which is preserved under the mappings $\Delta_R,~\Delta_L$.
\end{proof}
\begin{cor}
The differential algebra with the relations (\ref{22}), (\ref{28}) and (\ref{32}) has a graded Hopf algebra structure induced by $\hat{\Delta}=\Delta_R+\Delta_L$.
\end{cor}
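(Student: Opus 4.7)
The plan is to extend $\hat{\Delta}=\Delta_R+\Delta_L$ from the generators to a multiplicative map $\hat{\Delta}: \Omega^{\wedge}\to \Omega^{\wedge}\otimes\Omega^{\wedge}$, where $\Omega^{\wedge}\otimes\Omega^{\wedge}$ carries the graded tensor-product algebra structure (with signs governed jointly by form degree and $\mathbb{Z}_2$-parity). Concretely, on $\mathcal{M}\subset\Omega^{\wedge}$ one takes $\hat{\Delta}|_{\mathcal{M}}=\Delta_{\mathcal{M}}$, on the $1$-form generators one uses the formulas (\ref{30}), and one extends by multiplicativity. The bulk of the work is then to verify (i) that this extension descends to the quotient algebra defined by the relations (\ref{22}), (\ref{28}), (\ref{32}), and (ii) that it satisfies the graded Hopf algebra axioms.

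For well-definedness, the relations (\ref{22}) pose no difficulty since $\Delta_{\mathcal{M}}$ is already the coproduct of the Hopf superalgebra $\mathcal{M}$. For the relations (\ref{28}) between $x_i$'s and ${\sf d}x_j$'s, the preceding theorem has verified that $\Delta_L$, and by an analogous computation $\Delta_R$, preserves them when extended $\mathcal{M}$-linearly; what is new here is that $\hat{\Delta}(x_i)\hat{\Delta}({\sf d}x_j)$ produces mixed terms pairing $\Delta_R({\sf d}x_j)\in\Omega^{1}\otimes\mathcal{M}$ with factors of $\Delta_{\mathcal{M}}(x_i)$ on both sides of the tensor, and symmetrically for $\Delta_L({\sf d}x_j)\in\mathcal{M}\otimes\Omega^{1}$. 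After reordering through the graded tensor product, the requisite cancellation is precisely the compatibility condition (\ref{9}) expressing bicovariance. Finally, applying the exterior differential ${\sf d}$ to the relations (\ref{28}) yields the relations (\ref{32}) among differentials, and since $\hat{\Delta}\circ{\sf d}=({\sf d}\otimes\mbox{id}+(-1)^{\bullet}\mbox{id}\otimes{\sf d})\circ\hat{\Delta}$ by construction, preservation of (\ref{28}) forces preservation of (\ref{32}) automatically.

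For the remaining Hopf axioms, coassociativity on $x_i$ is coassociativity of $\Delta_{\mathcal{M}}$, and on ${\sf d}x_i$ it is obtained by applying ${\sf d}\otimes\mbox{id}\otimes\mbox{id}+\mbox{id}\otimes{\sf d}\otimes\mbox{id}+\mbox{id}\otimes\mbox{id}\otimes{\sf d}$ to the coassociativity identity already known for $\Delta_{\mathcal{M}}(x_i)$. For the counit, define $\hat{\varepsilon}|_{\mathcal{M}}=\varepsilon_{\mathcal{M}}$ and $\hat{\varepsilon}({\sf d}x_i)=0$ extended as an algebra map; the counit axiom on ${\sf d}x_i$ then follows from $\hat{\varepsilon}\circ{\sf d}=0={\sf d}\circ\hat{\varepsilon}$. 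For the antipode, take $\hat{S}|_{\mathcal{M}}=S_{\mathcal{M}}$ and $\hat{S}({\sf d}x_i)=-{\sf d}(S_{\mathcal{M}}(x_i))$ extended as a graded anti-homomorphism; verifying $\mu\circ(\hat{S}\otimes\mbox{id})\circ\hat{\Delta}=\eta\circ\hat{\varepsilon}$ on ${\sf d}x_i$ reduces, via the graded Leibniz rule for ${\sf d}$, to applying ${\sf d}$ to the corresponding antipode identity for $x_i$ in $\mathcal{M}$.

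The main obstacle is step (i) for the nonhomogeneous relation $[x_1,{\sf d}x_i]=h_i{\sf d}x_i$: one must carefully track the exponentials $e^{(z_i-1)x_1}$ and the factor $(-1)^{\hat{i}}(z_i-1)$ appearing in (\ref{30}), and then check that, after reordering via the graded tensor product multiplication, the $h_i$-shift on the left-hand side is exactly matched by the contributions arising from the cross-terms that couple $\Delta_R$ with $\Delta_L$. This is the place where the bicovariance compatibility (\ref{9}) does the essential work and where the computation must be done with the greatest care.
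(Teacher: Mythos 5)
The paper offers no proof of this corollary: it is asserted as an immediate consequence of the preceding theorem together with the general construction of the exterior Hopf algebra for a bicovariant calculus that is cited in the introduction (\cite{SLW3,TB}). Your proposal essentially reconstructs that general argument in this concrete setting, and its overall architecture --- extend $\hat{\Delta}$ multiplicatively with $\hat{\Delta}|_{\mathcal M}=\Delta_{\mathcal M}$ and (\ref{30}) on the ${\sf d}x_i$, check descent to the quotient by (\ref{22}), (\ref{28}), (\ref{32}), and obtain the remaining Hopf axioms by pushing the corresponding identities of $\mathcal{M}$ through ${\sf d}$ --- is sound and records more than the paper does. One correction, however: you have misplaced where the bicovariance compatibility (\ref{9}) enters. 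For well-definedness on the mixed relations (\ref{28}) there are no cross-terms between the $\Delta_R$- and $\Delta_L$-parts to cancel: $\Delta_{\mathcal M}(x_i)\cdot\Delta_R({\sf d}x_j)$ stays in $\Omega^1\otimes\mathcal M$ and $\Delta_{\mathcal M}(x_i)\cdot\Delta_L({\sf d}x_j)$ stays in $\mathcal M\otimes\Omega^1$, and these are complementary direct summands of the degree-one part of $\Omega^{\wedge}\otimes\Omega^{\wedge}$; hence $\hat{\Delta}$ preserves (\ref{28}) if and only if $\Delta_R$ and $\Delta_L$ each do so separately, which is exactly what the theorem verifies for $\Delta_L$ and what the analogous computation gives for $\Delta_R$. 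The condition (\ref{9}) is instead what you need for coassociativity of $\hat{\Delta}$ on one-forms: expanding $(\hat{\Delta}\otimes\mbox{id})\circ\hat{\Delta}$ and $(\mbox{id}\otimes\hat{\Delta})\circ\hat{\Delta}$ on ${\sf d}x_i$ produces the cross-terms $(\Delta_L\otimes\mbox{id})\circ\Delta_R$ and $(\mbox{id}\otimes\Delta_R)\circ\Delta_L$, whose equality is precisely (\ref{9}). With that relocation your argument closes up, and the ``main obstacle'' you flag in your final paragraph is in fact automatic.
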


One can also easily obtain deformation relations between the 
operators and the generators of $ \mathcal{M} $ using the
Leibniz rule and the differential calculus of $ \mathcal{M} $ as follows
\begin{equation} \label{35}
\begin{aligned}
&\left[\partial_{x_1},x_1\right]=1, \quad \left[\partial_{x_i},x_1\right] = h_i\partial_{x_i},\quad i=2,3,...,m+n, \\
 &\left[\partial_{x_i},x_j\right]_{\eta(j,i)}=\delta_{ij},  \quad i=1,2,...,m+n, \quad j=2,3,...,m+n.
\end{aligned}
\end{equation}
Using the nilpotency rule ${\sf d}^2=0$, one gets commutation relations
\begin{equation} \label{36}
\left[\partial_{x_i},\partial_{x_j}\right]_{\eta(i,j)}=0, \quad i,j=1,2,...,m+n.
\end{equation}
Finally, the deformed Weyl superalgebra ${\mathbb{C}}\left\langle x_1,x_2,...,x_{m+n},\partial_{x_1},\partial_{x_2},...,\partial_{x_{m+n}} \right\rangle$ is given by the defining relations (\ref{22}), (\ref{35}) and (\ref{36}). This deformed Weyl superalgebra becomes the usual Weyl superalgebra when  the all parameters $p_1,p_2,...,p_{m+n} \rightarrow 0$.
                                                                                                                                 
\section{Maurer-Cartan 1-forms on $\mathcal{M}$}

\noindent The right-invariant Maurer-Cartan form corresponding to any $f\in \mathcal{M}$ can be given by the following formula \cite{SLW3}:
\begin{equation} \label{37}
w_f:=m(({\sf d}\otimes S_{\mathcal{M}}) \Delta_{\mathcal{M}}(f)),
\end{equation}
where $m$ stands for the multiplication. Thus we have 
\begin{equation} \label{38}
 \omega_{x_i} = [ {\sf d}x_i  + (1-z_i) {\sf d}x_1 x_i]\ e^{(1-z_i)x_1}, \quad i=1,2,...,m+n.
\end{equation}
Denote the algebra generated by 
$ \omega_{x_1}$, $\omega_{x_2}$,...,$\omega_{x_{m+n}}$ by $\Theta $. First determine all commutation relations about the Maurer-Cartan forms . 
The generators of  $\Theta $ and the generators of $\mathcal{M}$
satisfy the following rules
\begin{equation} \label{39}
\begin{aligned}
 &\left[x_1, \omega_{x_1}\right]=0, \quad \left[x_1, \omega_{x_i}\right]=h_i\omega_{x_i}, \\
&x_i\omega_{x_j}=(-1)^{\hat{i}(\hat{j}+1)}{\eta(i,j)}e^{h_i(z_j-1)}\omega_{x_j}x_i,
\end{aligned}
\end{equation}
where $ i=2,3,...,m+n, j=1,2,...,m+n$. 
The commutation rules of the generators of $\Theta$ are of the following form:
\begin{equation} \label{40}
\left[\omega_{x_i}, \omega_{x_j}\right]=0, \quad i,j=1,2,...,m+n.
\end{equation}
The algebra $\Theta$ is a graded Hopf algebra with the following
comappings: for $i=1,2,...,m+n$, the coproduct $\Delta_\Theta: \Theta \longrightarrow
\Theta \otimes \Theta$ is defined by
\begin{equation} \label{41}
\Delta_\Theta(\omega_{x_i}) = \omega_{x_i} \otimes 1 + 1 \otimes \omega_{x_i}, 
\end{equation}
The counit $\varepsilon_\Theta: \Theta \longrightarrow {\mathbb{C}}$ is
given by
\begin{equation} \label{42}
\varepsilon_\Theta(\omega_{x_i}) = 0, 
\end{equation}
and the antipode $S_\Theta: \Theta \longrightarrow \Theta$ is
defined by
\begin{equation} \label{43}
S_\Theta(\omega_{x_i}) = - \omega_{x_i}.
\end{equation}

\section{ Quantum Lie superalgebra of vector fields}

\noindent In this section we give vector fields corresponding to the Maurer-Cartan 1-forms on $\mathcal{M}$ and Lie superalgebra of the vector fields.
 First we rewrite the Maurer-Cartan forms as follows:
\begin{equation} \label{44}
{\sf d}x_i = \omega_{x_i}e^{(z_i-1)x_1}+(z_i-1){\sf d}x_1x_i, \quad i=1,2,...,m+n
\end{equation}
and consider the exterior differential ${\sf d}$ in the following form
\begin{equation} \label{45}
{\sf d} = \omega_{x_1} T_{x_1} + \omega_{x_2} T_{x_2} +...+ \omega_{x_{m+n}} T_{x_{m+n}},
\end{equation}
where $T_{x_i}$'s are vector fields corresponding to the Maurer-Cartan forms. We can determine the vector fields in terms of the partial derivative operators holding relations given in (\ref{35}) and (\ref{36}).
 By inserting (\ref{44}) to the expression
\begin{equation} \label{46}
{\sf d}={\sf d}x_1 \partial_{x_1} + {\sf d}x_2 \partial_{x_2} +...+ {\sf d}x_{m+n} \partial_{x_{m+n}}, 
\end{equation}
we can find the (quantum) Lie superalgebra generators expressed in terms
of the operators:
\begin{equation} \label{47}
\begin{aligned}
&T_{x_1} \equiv   \partial_{x_1}+ \sum_{i=2}^{m+n}{(z_i-1)x_i\partial_{x_i}}, \\ 
&T_{x_i} \equiv e^{(z_i-1)x_1}\partial_{x_i}, \quad i=2,3,...,m+n.
\end{aligned}
\end{equation}
 Now, we can obtain the commutation relations of these
generators, as follows, using (\ref{22}), (\ref{35}) and (\ref{36}):
\begin{equation} \label{48}
\left[T_{x_i}, T_{x_j}\right] =0, \quad i,j=1,2,...,m+n,
\end{equation}
where the parity of $T_{x_i}$ is the same with one of $x_i$.
The commutation relations in (\ref{48}) should be
compatible with monomials in $\mathcal{M}$. To realize this,
it is sufficient to get commutation relations between the generators of  Lie
superalgebra and the coordinates of $\mathcal{M}$. They are derived by (\ref{22}) and (\ref{35}) as
\begin{equation} \label{49}
\begin{aligned}
& \left[T_{x_1},x_1\right]=1,~\left[T_{x_i},x_1\right] =h_iT_{x_i}, ~i=2,3,...,m+n, \\
&\left[T_{x_i},x_j\right]_{\eta(j,i)e^{(z_j-1)h_j}} =e^{(z_i-1)x_1}\delta_{ij}, ~i=2,3,...,m+n,~j=2,3,...,m+n. \\
\end{aligned}
\end{equation}

\section { Conclusion }
We introduced a multiparametric quantum (m+n)-superspace which as algebra is noncommutative in the sense of Manin superplane, and as Hopf superalgebra is cocommutative. Then we see that a bicovariant differential calculus on this quantum superspace results in the partial derivatives which represent the superalgebra on this quantum superspace. Morever, we constructed a noncocommutative Hopf superalgebra(quantum supergroup) related with the vector fields corresponding to the Maurer-Cartan forms on this quantum (m+n)-superspace. We also define new elements with nonhomogeneous relations by the logarithm of the grouplike element in the quantum (m+n)-superspace with homogeneous relations. The most interesting part of this algebra induced from the quantum (m+n)-superspace is that it reduces to the $\kappa-$deformed Minkowski superspace by some convenient constrains on the deformation parameters $p_i$'s and arbitrary integers $z_i$'s.  Finally, a bicovariant differential calculus and the relevant results over the logarithmic extension of the quantum (m+n)-superspace are given. 
\section{Acknowledgement}
We are very grateful to the reviewers for their valuable and thoughtful comments.
%\bibliographystyle{amsplain}
%\bibliography{a-reference}

\end{document}